\newtheorem*{rep@theorem}{\rep@title}
\newcommand{\newreptheorem}[2]{%
\newenvironment{rep#1}[1]{%
 \def\rep@title{#2 \ref*{##1}}%
 \begin{rep@theorem}}%
 {\end{rep@theorem}}}
\newtheorem{theorem}{Theorem}[section]
\newtheorem{corollary}[theorem]{Corollary}
\newcommand{\hlight}[1]{\ \\ \fcolorbox{black}{white}{\begin{minipage}{6.5in}{\color{black} #1}\end{minipage}} \\}
\renewenvironment{proof}{\setlength{\parindent}{\parindent}\MakeFramed {\noindent\emph{Proof.}}{}}
{\qed\endMakeFramed} 
\def\ket#1{{\lvert}#1\rangle}
\def\abs#1{\left| #1 \right|}
\def\cost#1{\mathfrak{#1}}
\def\a{f_A}
\def\b{f_B}
\newcommand{\eps}{\varepsilon}
\def\O{\mathrm{O}}
\def\tO{\widetilde{\mathrm{O}}}
\def\set#1{\mathcal{#1}}
\renewcommand{\th}[1]{${#1}^{\textrm{th}}$}
\renewcommand{\(}{\left(}
\renewcommand{\)}{\right)}
\newcommand{\defeq}{\colonequals}
\newcommand{\BSR}{\mathbb{B}}
\def\m{m}
\title{Improving Quantum Query Complexity of Boolean Matrix Multiplication Using Graph Collision}
\author[1,2]{Stacey Jeffery\thanks{\texttt{sjeffery@uwaterloo.ca}}}
\author[1,2]{Robin Kothari\thanks{\texttt{rkothari@cs.uwaterloo.ca}}}
\author[3]{Fr\'ed\'eric Magniez\thanks{\texttt{frederic.magniez@univ-paris-diderot.fr}}}
\affil[1]{\small David R.\ Cheriton School of Computer Science, University of Waterloo, Canada}
\affil[2]{Institute for Quantum Computing, University of Waterloo, Canada}
\affil[3]{\small LIAFA, Univ.\ Paris Diderot, CNRS; Paris, France}
\date{}
\begin{document}

\maketitle

\begin{abstract}
The quantum query complexity of Boolean matrix multiplication is typically studied as a function of the matrix dimension, $n$, as well as the number of $1$s in the output, $\ell$. We prove an upper bound of $\tO(n\sqrt{\ell})$ for all values of $\ell$. This is an improvement over previous algorithms for all values of $\ell$. On the other hand, we show that for any $\eps < 1$ and any $\ell \leq \eps n^2$, there is an $\Omega(n\sqrt{\ell})$ lower bound for this problem, showing that our algorithm is essentially tight.

We first reduce Boolean matrix multiplication to several instances of graph collision. We then provide an algorithm 
that takes advantage of the fact that the underlying graph in all of our instances is very dense to find all graph collisions efficiently. 
\end{abstract}

\section{Introduction}
Quantum query complexity has been of fundamental interest since the inception of the field of quantum algorithms \cite{BBBV97,gro96,sho97}.
The quantum query complexity of Boolean matrix multiplication was first studied by Buhrman and \v{S}palek~\cite{BS06}. In the Boolean matrix multiplication problem, we want to multiply two $n\times n$ matrices $A$ and $B$ over the Boolean semiring, which consists of the set $\{0,1\}$ with  logical \textsc{or} ($\vee$) as the addition operation and logical \textsc{and} ($\wedge$) as the multiplication operation.

For this problem it is standard to consider an additional parameter in the complexity: The number of $1$s in the product $C\defeq AB$, which we denote by $\ell$. We study the query complexity as a function of both $n$ and $\ell$, and obtain improvements for all values of $\ell$. 

The problem of Boolean matrix multiplication is of fundamental interest, in part due to its relationship to a variety of graph problems, such as the triangle finding problem and the all-pairs shortest path problem. 

Classically, it was shown by Vassilevska Williams and Williams that ``practical advances in triangle detection would imply practical [Boolean matrix multiplication] algorithms'' \cite{VW10}. The previous best quantum algorithm for Boolean matrix multiplication, by Le Gall, is based on a subroutine for finding triangles in graphs \emph{with a known tripartition} \cite{leg12}, already suggesting that the relationship between Boolean matrix multiplication and triangle finding might be more complex for quantum query complexity. We give further evidence for this by bypassing the triangle finding subroutine entirely. 

Despite its fundamental importance, much has remained unknown about the quantum query complexity of Boolean matrix multiplication and its relationship with other query problems in the quantum regime. Even for the simpler decision problem of Boolean matrix product verification, where we are given oracle access to three $n\times n$ Boolean matrices,  $A$, $B$ and $C$, and must decide whether or not $AB=C$, the quantum query complexity is unknown. The best upper bound is $O(n^{3/2})$ \cite{BS06}, whereas the lower bound was recently improved from the trivial $\Omega(n)$ to $\Omega(n^{1.055})$ by Childs, Kimmel, and Kothari~\cite{CKK11}. 

A better understanding of these problems may lead to an improved understanding of quantum query complexity in general. We contribute to this by closing the gap (up to logarithmic factors) between the best known upper and lower bounds for Boolean matrix multiplication for all $\ell\leq \eps n^2$ for any constant $\eps<1$. 

\paragraph{Previous Work.} We are interested in the query complexity of Boolean matrix multiplication, where we count the number of accesses (or \emph{queries}) to the input matrices $A$ and $B$. Buhrman and \v{S}palek~\cite[Section 6.2]{BS06} describe how to perform Boolean matrix multiplication using $\tO(n^{3/2}\sqrt{\ell})$ queries, by simply quantum searching for a pair $(i,j)\in[n]\times[n]$ such that there is some $k\in[n]$ for which $A[i,k]=B[k,j]=1$, where $[n]=\{1,\dots,n\}$. By means of a classical reduction relating Boolean matrix multiplication and triangle finding, Vassilevska Williams and Williams~\cite{VW10} were able to combine the quantum triangle finding algorithm of Magniez, Santha and Szegedy~\cite{MSS07} with a classical strategy of Lingas~\cite{lin09} to get a quantum algorithm for Boolean matrix multiplication with query complexity $\tO(\min\{n^{1.3}\ell^{17/30},n^2+n^{13/15}\ell^{47/60}\})$. 

Recently, Le Gall \cite{leg12} improved on their work by noticing that the triangle finding needed for Boolean matrix multiplication involves a tripartite graph with a known tripartition. He then recast the known quantum triangle finding algorithm of \cite{MSS07} for this special case and improved the query complexity of Boolean matrix multiplication. He then further improved the algorithm for large $\ell$ by adapting the strategy of Lingas to the quantum setting. 

\paragraph{Our Contributions.} Since previous quantum algorithms for Boolean matrix multiplication are based on a triangle finding subroutine, a natural question to ask is whether triangle finding is a bottleneck for this problem. We show that this is not the case by bypassing the triangle finding problem completely to obtain a nearly tight result for Boolean matrix multiplication.

A key  ingredient of the best known quantum algorithm for triangle finding is an efficient algorithm for the graph collision problem. Our main contribution is to build an algorithm directly on graph collision instead, bypassing the use of a triangle finding algorithm. Surprisingly, we do not use the graph collision algorithm that is used as a subroutine in the best known quantum algorithm for triangle finding. That algorithm is based on Ambainis' quantum walk for the element distinctness problem~\cite{amb04}. Our algorithm, on the other hand, does not have any quantum walks.

There are two main ideas we would like to impress upon the reader. First, we can reduce the Boolean matrix multiplication problem to several instances of the graph collision problem. Second, the instances of graph collision that arise depend on $\ell$; in particular, they have at most $\ell$ non-edges. Furthermore, we need to find all graph collisions, not just one. We provide an algorithm to find a graph collision in query complexity $\tO(\sqrt{\ell}+\sqrt{n})$, or to find all graph collisions in time $\tO(\sqrt{\ell}+\sqrt{n\lambda})$, where $\lambda\geq 1$ is the number of graph collisions. Combining these ideas yields the aforementioned $\tO(n\sqrt{\ell})$ upper bound.

A lower bound of $\Omega(n\sqrt{\ell})$ for all values of $\ell\leq\eps n^2$ for any constant $\eps<1$ follows from a simple reduction to $\ell$-\textsc{Threshold}, which we state formally in Theorem \ref{th:lb}. 

This paper is organized as follows. After presenting some preliminaries in Section \ref{sec:prelim}, we describe in Section \ref{sec:graph-col}, the graph collision problem, its relationship to Boolean matrix multiplication, and a subroutine for finding all graph collisions when there are at most $\ell$ non-edges. In Section \ref{sec:bmm}, we apply our graph collision subroutine to get the stated upper bound for Boolean matrix multiplication, and then describe a tight lower bound that applies to all values of $\ell\leq \eps n^2$ for $\eps<1$.  

\section{Preliminaries}\label{sec:prelim}
\subsection{Quantum Query Framework}

For a more thorough introduction to the quantum query model, see \cite{BBCMdW01}. For Boolean matrix multiplication, we assume access to two query operators that act as follows on a Hilbert space spanned by $\{\ket{i,j,b}:i,j\in[n],b\in\{0,1\}\}$:
$$\mathcal{O}_A:\ket{i,j,b}\mapsto\ket{i,j,b\oplus A[i,j]}\;\;\mathcal{O}_B:\ket{i,j,b}\mapsto\ket{i,j,b\oplus B[i,j]}$$
In the quantum query model, we count the uses of $\mathcal{O}_A$ and $\mathcal{O}_B$, and ignore the cost of implementing other unitaries which are independent of $A$ and $B$. We call $\mathcal{O}_A$ and $\mathcal{O}_B$ the \emph{oracles}, and each access a \emph{query}. The query complexity of an algorithm is the maximum number of oracle accesses used by the algorithm, taken over all inputs.

We denote a problem \textsc{P} by a map $\mathcal{X}\to2^\mathcal{Y}$, where $\mbox{\textsc{P}}(x)\subseteq \mathcal{Y}$ denotes the set of valid outputs on input $x$.
We say a quantum algorithm $\mathsf{A}$ solves a problem \textsc{P}$:\mathcal{X}\to2^\mathcal{Y}$ with bounded error $\delta(\abs{x})$ if for all $x\in\mathcal{X}$, $\Pr[\mathsf{A}(x)\in\mbox{\textsc{P}}(x)]\geq 1-\delta(\abs{x})$, where $\abs{x}$ is the size of the input. The quantum query complexity of \textsc{P} is the minimum query complexity of any quantum algorithm that solves \textsc{P} with bounded error $\delta(\abs{x})\leq 1/3$.

We will use the phrase \emph{with high probability} to mean probability at least $1-\frac{1}{\mathrm{poly}}$ for some super-linear polynomial. We ensure that all our subroutines succeed with high probability, to achieve a bounded-error algorithm at the end. To achieve such high probability, we will necessarily incur $\mathrm{polylog}$ factors. We will use the notation $\tO$ to indicate that we are suppressing $\mathrm{polylog}$ factors. More precisely, $f(n) \in \tO(g(n))$ means $f(n) \in \O(g(n)\log^k n)$ for some constant $k$.

\paragraph{Boolean Matrices.}
We let $\BSR$ denote the \emph{Boolean semiring}, which is the set $\{0,1\}$ under the operations $\vee,\wedge$. The problem we will be considering is formally defined as the following:

\begin{quote}
\textsc{Boolean Matrix Multiplication}\\
\textit{Oracle Input:} Two Boolean matrices $A,B\in\BSR^{n\times n}$. \\
\textit{Output:} $C\in\BSR^{n\times n}$ such that $C=AB$.
\end{quote}

In $\BSR^{n\times n}$, we say that $C=AB$ if for all $i,j\in [n]$, $C[i,j]=\bigvee_{k=1}^n A[i,k]\wedge B[k,j]$. We will use the notation $A+B$ to denote the entry-wise $\vee$ of two Boolean matrices.

\subsection{Quantum Search Algorithms \label{sec:search}}

In this section we examine some well-known variations of the search problem that we require. The reader familiar with quantum search algorithms may skip to Section \ref{sec:graph-col}.

Any search problem can be recast as searching for a marked element among a given collection, $U$.
In order to formalize this, let $f:U\to \{0,1\}$ be a function whose purpose is to identify marked elements.
An element is marked if and only if $f(x)=1$. Define $t_f=|f^{-1}(1)|$.
In Grover's search algorithm, the algorithm can directly access $f$, and the overall complexity can be stated as the number of queries to $f$.
In the following $t\geq 1$ is an integer parameter.
\begin{theorem}[\cite{gro96}]\label{thm:grover}
There is a quantum algorithm, $\mathsf{GroverSearch}(t)$, with query complexity $\widetilde{\mathrm{O}}(\sqrt{|U|/t})$ to $f$, 
such that, if $t/2\leq t_f\leq t$, then $\mathsf{GroverSearch}(t)$ finds a marked element with probability at least $1-1/\mathrm{poly}(|U|)$.\\
Moreover, if $t_f=0$, then $\mathsf{GroverSearch}(t)$ declares with probability $1$ that there is no marked element.
\end{theorem}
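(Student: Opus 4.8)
The plan is to implement the standard Grover rotation with the number of iterations chosen from the known upper bound $t$, and to wrap it in a single verifying query together with a logarithmic number of repetitions. The verification is what simultaneously delivers the high-probability guarantee and the perfect soundness in the $t_f=0$ case.

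First I would set up the usual two-dimensional picture. Starting from the uniform superposition $\ket{\psi}=\frac{1}{\sqrt{|U|}}\sum_{x\in U}\ket{x}$, a single Grover iteration---the oracle reflection $I-2\sum_{f(x)=1}\ket{x}\bra{x}$ followed by the reflection $2\ket{\psi}\bra{\psi}-I$ about $\ket{\psi}$---acts as a rotation by angle $2\theta$ in the plane spanned by the uniform superpositions over the marked and the unmarked elements, where $\sin^2\theta=t_f/|U|$. After $m$ iterations the probability of measuring a marked element is $\sin^2\bigl((2m+1)\theta\bigr)$. I would set $m=\lfloor\frac{\pi}{4}\sqrt{|U|/t}\rfloor$, which depends only on the known parameter $t$ and costs $m=\O(\sqrt{|U|/t})$ queries. (When $t$ is a constant fraction of $|U|$ this gives $m=\O(1)$ and a few uniform samples already suffice, so that regime can be handled separately.)

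The crux of the argument is that this \emph{fixed} $m$ amplifies the success probability uniformly over the entire promised band $t/2\le t_f\le t$, which is the main point of departure from textbook Grover, where $m$ is tuned to the exact value of $t_f$. For $t_f$ ranging over $[t/2,t]$, the angle $\theta$ varies by a factor of at most $\sqrt{2}$ (since the ratio of the corresponding sines is exactly $\sqrt{2}$, and shrinks for larger angles), so $(2m+1)\theta$ stays within an interval bounded away from both $0$ and $\pi$. Consequently $\sin^2\bigl((2m+1)\theta\bigr)\geq c$ for some absolute constant $c>0$ throughout the band; the integer rounding of $m$ perturbs $(2m+1)\theta$ by at most $\theta$, which is negligible in the regime where $m$ is large. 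Establishing this uniform constant lower bound is the step I expect to require the most care.

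Finally I would boost and add soundness. After the $m$ iterations I measure and spend one query to $f$ to check whether the outcome is marked, outputting it if so. Repeating the whole block $\Theta(\log|U|)$ times and halting at the first verified marked element, the probability that every independent trial fails is at most $(1-c)^{\Theta(\log|U|)}=1/\mathrm{poly}(|U|)$, which yields the claimed success probability; these $\Theta(\log|U|)$ repetitions are exactly the polylog factor absorbed into $\tO(\sqrt{|U|/t})$. For the ``moreover'' clause, if $t_f=0$ then $\ket{\psi}$ lies entirely in the unmarked subspace, so every measurement outcome fails the verifying query and the algorithm declares ``no marked element'' with probability $1$---the verification step is precisely what makes this rejection perfect rather than merely high-probability.
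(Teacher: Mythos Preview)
Your argument is correct and is exactly the standard route: fix the Grover iteration count from the known upper bound~$t$, observe that the rotation angle stays in a constant-width window as $t_f$ ranges over $[t/2,t]$ so the success probability is bounded below by an absolute constant, then verify-and-repeat $\Theta(\log|U|)$ times to obtain both the $1/\mathrm{poly}(|U|)$ error and the perfect rejection when $t_f=0$.

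There is nothing to compare against, however: the paper does not prove this theorem. It is stated as a citation to~\cite{gro96} (with the multiplicative-approximation and amplification refinements that are folklore, cf.\ the discussion around~\cite{BBHT98} immediately following the statement), and is invoked as a black box. So your write-up is not a different approach from the paper's---it is simply a proof where the paper gives none. If anything, note that the statement as phrased (robustness to a factor-two uncertainty in $t_f$, high-probability success, perfect soundness) goes somewhat beyond what is literally in Grover's original paper, and your proof correctly supplies those extra ingredients.
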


There are several ways to generalize the above statement when no approximation of $t$ is known.
Most of the generalizations in the literature are stated in terms of expected query complexity, such as in~\cite{BBHT98}.
Nonetheless, one can derive from~\cite[Lemma~2]{BBHT98} an algorithm
in terms of worst case complexity, when only a lower bound $t$ on $t_f$ is known.
The algorithm consists of $T$ iterations of one step of the original Grover algorithm
where~$T$ is chosen uniformly at random from $[0,\sqrt{|U|/t}\,]$. This procedure is iterated $\mathrm{O}(\log |U|)$ times in order to get bounded error $1/\mathrm{poly}(|U|)$.
\begin{corollary}\label{cor:search}
There is a quantum algorithm $\mathsf{Search}(t)$ with query complexity $\widetilde{\mathrm{O}}(\sqrt{|U|/t})$ to $f$,
such that, if $t_f\geq t$, then $\mathsf{Search}(t)$ finds a marked element with probability at least  $1-1/\mathrm{poly}(|U|)$.\\
Moreover, if $t_f=0$, then $\mathsf{Search}(t)$ declares with probability $1$ that there is no marked element.
\end{corollary}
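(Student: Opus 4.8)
The plan is to promote the two-sided guarantee of Theorem~\ref{thm:grover} to a one-sided one by searching over a geometric sequence of guesses for $t_f$. Theorem~\ref{thm:grover} locates a marked element only when the parameter $t'$ passed to it satisfies $t'/2 \le t_f \le t'$, whereas here we are handed only the lower bound $t_f \ge t$. I would therefore run $\mathsf{GroverSearch}(2^i t)$ for each $i = 0, 1, \dots, \lceil \log_2(|U|/t) \rceil$, output the first verified marked element that appears, and otherwise declare that no marked element exists.

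First I would check that one guess lands in the admissible window. Taking $i^* = \lceil \log_2(t_f/t) \rceil$ and $t' = 2^{i^*} t$, the definition of the ceiling gives $2^{i^*-1} < t_f/t \le 2^{i^*}$, hence $t'/2 \le t_f \le t'$; and since $t \le t_f \le |U|$ we have $0 \le i^* \le \lceil \log_2(|U|/t) \rceil$, so this guess is among those tried. By Theorem~\ref{thm:grover}, the corresponding call $\mathsf{GroverSearch}(t')$ returns a marked element with probability at least $1 - 1/\mathrm{poly}(|U|)$.

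Next I would account for the cost. Since $\mathsf{GroverSearch}(2^i t)$ uses $\tO(\sqrt{|U|/(2^i t)})$ queries, summing over $i$ gives $\tO(\sqrt{|U|/t}) \sum_{i \ge 0} 2^{-i/2}$, and as this geometric series converges to a constant the total remains $\tO(\sqrt{|U|/t})$. For correctness when $t_f \ge t$, the point I would be most careful to state precisely is that no union bound over the calls is needed: $\mathsf{GroverSearch}$ reports an element only after verifying it with a single query to $f$, so a call whose guess falls outside the regime of Theorem~\ref{thm:grover} can never produce a false positive. It therefore suffices that the single good call at $i^*$ succeed, and the overall failure probability is just its $1/\mathrm{poly}(|U|)$. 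The case $t_f = 0$ is then immediate: every call declares ``no marked element'' with probability $1$, so the whole procedure does as well.

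Because the argument is routine once Theorem~\ref{thm:grover} is available, there is no deep obstacle here; the genuine technical content lies upstream, in the two-sided guarantee itself. An alternative matching the route through \cite{BBHT98} would bypass the doubling and instead run a single Grover stage with the number of iterations drawn uniformly from $[0, \sqrt{|U|/t}\,]$. There the crux is the averaging bound $\frac{1}{m}\sum_{j=0}^{m-1} \sin^2\!\big((2j+1)\theta\big) \ge \frac12 - \frac{1}{2m\sin(2\theta)}$ with $\sin^2\theta = t_f/|U|$, which yields a constant success probability whenever $t_f \ge t$; $\mathrm{O}(\log|U|)$ independent repetitions then amplify this to $1 - 1/\mathrm{poly}(|U|)$.
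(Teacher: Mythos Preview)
Your proposal is correct. Your primary argument---running $\mathsf{GroverSearch}(2^i t)$ over a geometric ladder of guesses---is not what the paper does: the paper instead takes exactly the \cite{BBHT98} route you sketch at the end, drawing the iteration count uniformly from $[0,\sqrt{|U|/t}\,]$ and repeating $\mathrm{O}(\log|U|)$ times. Your doubling approach has the advantage of using Theorem~\ref{thm:grover} purely as a black box, whereas the paper's route reaches back into the averaging lemma of \cite{BBHT98}; on the other hand, the paper's version avoids the mild caveat you flagged about whether $\mathsf{GroverSearch}$ can emit false positives outside its promised regime (though, as you note, a single verification query per call resolves this at negligible cost). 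Both arguments are standard and yield the same bound.
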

One consequence of Corollary~\ref{cor:search} is that we can always apply $\mathsf{Search}(t)$ with $t=1$,
when no lower bound on $t_f$ is given. In that case, 
we simply refer to the resulting algorithm as $\mathsf{Search}$. 
Its query complexity to $f$ is then $\widetilde{\mathrm{O}}(\sqrt{|U|})$.

Another simple generalization is for finding all marked elements. 
This generalization is stated in the literature in various ways for expected and worst case complexity.
For the sake of clarity we explicitly describe one version of this procedure
using $\mathsf{GroverSearch}$ as a subroutine.
This version is robust in the sense that it works even when the number of marked elements may decrease arbitrarily. This may occur, for example, when the finding of one marked element may cause several others to become unmarked.
This situation will naturally occur in the context of Boolean matrix multiplication.
Then the complexity  will only depend on the number of elements that are actually in the output, as opposed to the number of elements that were marked at the beginning of the algorithm.

\hlight{$\mathsf{SearchAll}\phantom{()}$
\hrule
\vspace{2pt}
\begin{enumerate}
\item Let $t=|U|$, and $V=U$
\item While $t\geq 1$
\begin{enumerate}
\item Apply $\mathsf{GroverSearch}(t)$ to $V$
\item If a marked element $x$ is found:
Output $x$; Set $V\gets V-\{x\}$ and $t\gets t-1$\\
Else: $t\gets t/2$
\end{enumerate}
\item If no marked element has been found, declare `no marked element'
\end{enumerate}
}
\begin{corollary}\label{cor:search-all1}
$\mathsf{SearchAll}$ has query complexity $\widetilde{\mathrm{O}}(\sqrt{|U| (t_f+1)})$ to $f$,
and finds all marked elements with probability at least $1-1/\mathrm{poly}(|U|)$.\\
Moreover, if $t_f=0$, then $\mathsf{SearchAll}$ declares with probability $1$ that there is no marked element.
\end{corollary}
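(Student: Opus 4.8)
The plan is to track the variable $t$ as a steadily shrinking over-estimate of the number of marked elements still present in $V$, and to charge the total cost to the two kinds of loop iterations separately. Write $k$ for the number of marked elements in $V$ at a given moment, and recall that $t_f$ is the number of elements the algorithm outputs (equivalently, finds). Since marking can only be removed as the algorithm runs (outputting one element may unmark others, but never creates new marked elements), $k$ only decreases apart from our own deletions. The invariant I would maintain is $t\ge k$ throughout. It holds initially because $t=|U|\ge|V|\ge k$. On a \emph{find} iteration we set $t\gets t-1$ and delete the found element, so the new count $k'$ satisfies $k'\le k-1$, preserving $t-1\ge k'$. On a \emph{halve} iteration $\mathsf{GroverSearch}(t)$ reported no marked element; conditioning on that call behaving as in Theorem~\ref{thm:grover}, the guarantee ``$t/2\le k\le t\Rightarrow$ a marked element is found,'' together with the current invariant $t\ge k$, forces $k<t/2$, so after $t\gets t/2$ we still have $t\ge k$.

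For correctness I would first condition on the event that every $\mathsf{GroverSearch}$ call behaves according to its guarantee. The loop halts only when $t<1$, and the invariant then gives $k<1$, i.e.\ $k=0$, so no marked element remains in $V$ and every element marked at termination has been output. To discharge the conditioning, note that the algorithm makes at most $\log_2|U|$ halving calls and exactly $t_f$ finding calls, hence $\O(|U|)$ Grover calls in all, each failing with probability at most $1/\mathrm{poly}(|U|)$; a union bound (choosing the polylog repetitions inside $\mathsf{GroverSearch}$ so that the degree of the polynomial beats the number of calls) gives overall success $1-1/\mathrm{poly}(|U|)$. The $t_f=0$ clause is then immediate: when $V$ never contains a marked element, every call deterministically reports ``none,'' only halving occurs, and the algorithm declares ``no marked element'' with certainty.

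For the query complexity I would use that $t$ is monotonically non-increasing (both operations shrink it), starting at $|U|$, and sum the per-call cost $\tO(\sqrt{|U|/t})$. Halving calls contribute little: there are at most $\log_2|U|$ of them and each costs at most $\tO(\sqrt{|U|})$ since $t\ge1$ inside the loop, for a total of $\tO(\sqrt{|U|})$. For the finding calls I would bucket them by the dyadic range $t\in[2^{j},2^{j+1})$. Because each find lowers $t$ by exactly $1$ and $t$ only ever decreases, at most $2^{j}$ finds occur while $t$ lies in a range of width $2^{j}$; there are also at most $t_f$ finds in total, so bucket $j$ holds at most $\min(2^{j},t_f)$ finds. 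Hence
\[
\sum_{j}\min\left(2^{j},\,t_f\right)\cdot\tO\!\left(\sqrt{|U|/2^{j}}\right)=\tO\!\left(\sqrt{|U|\,t_f}\right),
\]
the geometric series being dominated by the buckets with $2^{j}\approx t_f$. Adding the two contributions gives $\tO\!\left(\sqrt{|U|}+\sqrt{|U|\,t_f}\right)=\tO(\sqrt{|U|(t_f+1)})$, with the $+1$ accounting for the halving cost when $t_f=0$.

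The part demanding the most care is robustness to the dynamically shrinking marked set. I would stress that the complexity bound is purely combinatorial: it depends only on the monotone trajectory of $t$ and on $t_f$ being the number of elements \emph{actually output}, never on whether marks disappear, so the dynamic behaviour touches only the correctness argument, where it enters benignly through the inequality $k'\le k$ used to preserve $t\ge k$. The one genuine subtlety is that $t_f$ may be as large as $\Omega(|U|)$, so the union bound ranges over polynomially many Grover calls; this is exactly why each call's failure probability must be a sufficiently small inverse polynomial, a cost absorbed into the $\tO$.
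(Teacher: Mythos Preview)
The paper states this corollary without proof, so there is no argument to compare against directly; the algorithm is simply described and the bound asserted. Your proof is correct. The invariant $t\ge k$ is properly maintained under both branches, the termination condition $t<1$ then forces $k=0$, and the union bound over the at most $\log_2|U|+t_f\in O(|U|)$ Grover calls is sound. The dyadic bucketing for the finding iterations is valid and gives the claimed $\tO(\sqrt{|U|\,t_f})$.

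One minor simplification you might consider: since $t$ drops by exactly $1$ at each find and never increases, the $i$-th find (of $t_f$ total) occurs at some value $t\ge t_f-i+1$, so the total cost of finding iterations is at most $\sum_{j=1}^{t_f}\tO\bigl(\sqrt{|U|/j}\,\bigr)=\tO(\sqrt{|U|\,t_f})$ by a direct integral bound, avoiding the bucketing. This is exactly the estimate the paper later uses in the analysis of $\mathsf{BMM}$ (the chain from line~(1) to line~(3)), so it would align your argument with the paper's style. But your version is equally valid.
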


We end this section with an improvement of $\mathsf{GroverSearch}$ when we are looking for an optimal solution for some notion of maximization.
\begin{theorem}[\cite{dh96,dhhm06}]\label{th:find-max}
Given a function $g:U\rightarrow \mathbb{R}$, there is a quantum algorithm, $\mathsf{FindMax}(g)$, with query complexity $\tO(\sqrt{\abs{U}})$ to $f$, such that $\mathsf{FindMax}(g)$ returns $x\in f^{-1}(1)$ such that $g(x)=\max_{x'\in f^{-1}(1)}g(x')$ with probability at least $1-1/\mathrm{poly}(\abs{U})$.
Moreover, if $t_f=0$, then $\mathsf{FindMax}(g)$ declares with probability $1$ that there is no marked element.
\end{theorem}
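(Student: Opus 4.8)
The plan is to adapt the Dürr--Høyer quantum maximum-finding strategy \cite{dh96,dhhm06} to the present setting, where the maximum of $g$ is sought only over the marked set $f^{-1}(1)$. The algorithm maintains a \emph{champion} $y\in f^{-1}(1)$, the best marked element discovered so far. Each round searches for a marked element whose $g$-value strictly exceeds $g(y)$; if one is found it becomes the new champion, and the loop repeats. When a round fails to find any improving element, the current champion is, with high probability, the true maximizer.

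I would first fix the inner search subproblem. Given a champion $y$, define the auxiliary marking function $f_y(x):=f(x)\wedge[g(x)>g(y)]$, which is evaluated with a single query to $f$ together with a free classical comparison against the stored value $g(y)$. Since the number $k$ of elements beating the champion is unknown, I would run the exponential quantum searching routine of \cite{BBHT98} on $f_y$, which finds an $f_y$-marked element in expected $\O(\sqrt{|U|/k})$ queries to $f$ when $k\ge1$. To seed the loop, I would run $\mathsf{Search}$ (Corollary \ref{cor:search}) on $f$ itself, yielding a uniformly random marked element, or reporting $f^{-1}(1)=\emptyset$.

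The crux is bounding the \emph{expected total} number of queries across all rounds, and this is where I expect the real work to lie. Ranking the $m=t_f$ marked elements by decreasing $g$-value, so that the sought maximizer has rank $1$, the champion's rank strictly decreases over the run and is initially uniform in $\{1,\dots,m\}$. A short combinatorial argument shows that the probability the champion ever takes rank $r$ is exactly $1/r$, while a round launched from rank $r$ faces $r-1$ improving elements and hence costs $\O(\sqrt{|U|/r})$ queries in expectation. The expected total is therefore at most $\sum_{r=1}^{m}\frac1r\cdot\O\!\left(\sqrt{|U|/r}\right)=\O(\sqrt{|U|})\sum_{r\ge1}r^{-3/2}=\O(\sqrt{|U|})$, the series being convergent. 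To turn this expectation into the worst-case guarantee claimed, I would truncate after $c\sqrt{|U|}$ queries for a suitable constant $c$; Markov's inequality then yields at least constant success probability, and running the whole procedure $\O(\log|U|)$ independent times while keeping the best champion boosts the success probability to $1-1/\mathrm{poly}(|U|)$, at the cost only of the polylogarithmic factors absorbed by $\tO$.

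Finally, the degenerate case $t_f=0$ needs no extra argument: the seeding search $\mathsf{Search}$ on $f$, being a $\mathsf{GroverSearch}$-type routine (Theorem \ref{thm:grover}), reports an empty search space with probability $1$, so $\mathsf{FindMax}(g)$ declares `no marked element' deterministically. I anticipate the main difficulty to be precisely the expected-cost bookkeeping above, since each inner search is run over an \emph{unknown} and shrinking number of surviving candidates, which is exactly why the expected-time exponential search of \cite{BBHT98} --- rather than a single fixed-$t$ Grover call --- is the right primitive to invoke.
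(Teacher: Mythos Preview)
The paper does not supply its own proof of this statement; Theorem~\ref{th:find-max} is quoted directly from \cite{dh96,dhhm06} as a black-box result, with no argument given. Your proposal correctly reconstructs the standard D\"urr--H{\o}yer maximum-finding analysis, adapted to range only over the marked set $f^{-1}(1)$, and the expected-cost bookkeeping via the convergent series $\sum_{r\ge 1} r^{-3/2}$ together with truncation and $\O(\log|U|)$-fold repetition is exactly the right route to the stated $\tO(\sqrt{|U|})$ worst-case bound with $1-1/\mathrm{poly}(|U|)$ success probability. There is simply nothing in the paper to compare against.
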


\section{Graph Collision}\label{sec:graph-col}

In this section we describe the graph collision problem, and its relation to Boolean matrix multiplication. We then describe a method for solving the special case of graph collision in which we are interested.  

\subsection{Problem Description}

Graph collision is the following problem. Let $G=(\set{A},\set{B},E)$ be a balanced bipartite graph on $2n$ vertices. We will suppose $\set{A}=[n]$ and $\set{B}=[n]$, though we note that in the bipartite graph, the vertex labelled by $i$ in $\set{A}$ is distinct from the vertex labelled by $i$ in $\set{B}$. 

\begin{quote}
\textsc{Graph Collision}$(G)$\\
\textit{Oracle Input:} A pair of Boolean functions $\a:\set{A}\rightarrow\{0,1\}$ and $\b:\set{B}\rightarrow\{0,1\}$. \\
\textit{Output:} $(i,j)\in \set{A}\times \set{B}$ such that $\a(i)=\b(j)=1$ and $(i,j)\in E$, if such a pair exists, otherwise reject.
\end{quote}

The graph collision problem was introduced by Magniez, Santha and Szegedy as a subproblem in triangle finding \cite{MSS07}.  The subroutine used to solve an instance of graph collision is based on Ambainis' quantum walk algorithm for element distinctness \cite{amb04}, and has query complexity $\O(n^{2/3})$. The same subroutine is used in the current best triangle finding algorithm of Belovs~\cite{bel11}. However, the best known lower bound for this problem is $\Omega(\sqrt{n})$. It is an important open problem to close this gap. 

To obtain our upper bound, we do not use the quantum walk algorithm for graph collision, but rather, a new algorithm that takes advantage of two special features of our problem. The first is that we always know an upper bound, $\ell$, on the number of non-edges. When $\ell\leq n$, we can find a graph collision in $\O(\sqrt{n})$ queries. The second salient feature of our problem is that we need to find all graph collisions.

\subsection{Relation to Boolean Matrix Multiplication}\label{sec:rel-bmm}

Recall that the Boolean matrix product of $A$ and $B$, can be viewed as the sum (entry-wise $\vee$) of $n$ outer products: $C = \sum_{k=1}^n A[\cdot,k]B[k,\cdot]$, where $A[\cdot,k]$ denotes the \th{k} column of $A$ and $B[k,\cdot]$ denotes the \th{k} row of $B$. 

For a fixed $k$, if there exists some $i\in [n]$ and some $j\in[n]$ such that $A[i,k]=1$ and $B[k,j]=1$, then we know that $C[i,j]=1$, and we say that $k$ is a \emph{witness} for $(i,j)$. We are interested in finding all such pairs $(i,j)$. For each index $k$, we could search for all pairs $(i,j)$ with $A[i,k]=B[k,j]=1$; however, this could be very inefficient, since a pair $(i,j)$ may have up to $n$ witnesses. Instead, we will keep a matrix $\widetilde{C}$ such that $\widetilde{C}[i,j]=1$ if we have already found a one at position $(i,j)$. Thus, we want to find a pair $(i,j)$ such that $A[i,k]=B[k,j]=1$ and $\widetilde{C}[i,j]=0$. That is, we want to find a graph collision in the graph with bi-adjacency matrix $\overline{\widetilde{C}}$, the entry-wise complement of $\widetilde{C}$, and $\a=A[\cdot,k]$, $\b=B[k,\cdot]$.

This gives the following natural algorithm for Boolean matrix multiplication, 
whose details and full analysis can be found in Section~\ref{sec:algo}:
\begin{quote}
First, let $\widetilde{C} = 0$.\\
Search for an index $k$ such that the graph collision problem on $k$ with $\overline{\widetilde{C}}$ as the underlying graph has a collision.\\
If no such $k$ is found then we are done, and  $\widetilde{C}$ is the product of $A$ and $B$.\\
Otherwise,  find all the graph collisions on the graph defined by $\overline{\widetilde{C}}$ with oracles $A[\cdot,k]$ and $B[k,\cdot]$ and record them in $\widetilde{C}$.\\
Eliminate this $k$ from future searches and search for another index $k$ again.
\end{quote}

\subsection{Algorithm for Graph Collision}

When $G$ is a complete bipartite graph, then the relation between $\set{A}$ and $\set{B}$ defined by $G$ is trivial. In that case, there is a very simple algorithm to find a graph collision: Search for some $i\in [n]$ such that $\a(i)=1$. Then search for some $j\in [n]$ such that $\b(j)=1$. Then $(i,j)$ is a graph collision pair. The query complexity of this is $\O(\sqrt{n}+\sqrt{n})$. 
However, when $G$ is not a complete bipartite graph, there is a nontrivial relation between $\set{A}$ and $\set{B}$. The best known algorithm solves this problem using a quantum walk. 

In our case, we can take advantage of the fact that the graph we are working with always has at most $\ell$ non-edges --- it is never more than distance $\ell$ from the complete bipartite graph, which we know is easy to deal with. We are therefore interested in the query complexity of finding a graph collision in some graph with $m$ non-edges, which we denote $\cost{GC}(n,m)$. In our case, $\ell$ will always be an upper bound on $m$.

For larger values of $\ell$, we will also make use of the fact that for some $k$, we will have multiple graph collisions to find. We let $\cost{GC}_\text{all}(n,m,\lambda)$ denote the query complexity of finding all graph collisions in a graph with $m$ non-edges, where $\lambda$ is the number of graph collisions. It is not necessary to know $\lambda$ a priori.

Again we note that if $G$ is a complete bipartite graph, then we can accomplish the task of finding all graph collisions using $\mathsf{SearchAll}$ to search for all marked elements on each of $\a$ and $\b$, and output $\a^{-1}(1)\times \b^{-1}(1)$. Letting $t_A=\abs{\a^{-1}(1)}$ and $t_B=\abs{\b^{-1}(1)}$, so the total number of graph collision pairs is $\lambda=t_At_B$, the query complexity of this method is $\O(\sqrt{nt_A}+\sqrt{nt_B})\in \O(\sqrt{n\lambda})$. So if $G$ is close to being a complete bipartite graph, we would like to argue that we can do nearly as well. This motivates the following algorithm.

\hlight{$\mathsf{AllGC}_G(\a,\b)$
\hrule
\vspace{2pt}
Let $\m$ denote the number of non-edges in $G$. Let $d_i$ be the degree of the \th{i} vertex in $\mathcal{A}$, and let $c_i \defeq n - d_i$. Let the vertices in $\mathcal{A}$ be arranged in decreasing order of degree, so that $d_1 \geq d_2 \geq \ldots \geq d_n$. We will say a vertex $i$ in $\mathcal{A}$ (resp. $\set{B}$) is \emph{marked} if $\a(i)=1$ (resp. $\b(i)=1$).
\begin{enumerate}
  \item Find the highest degree marked vertex in $\mathcal{A}$ using $\mathsf{FindMax}$. Let $r$ denote the index of this vertex. \hfill $\tO(\sqrt{n})$
  
\item Case 1: If $c_r \leq \sqrt{\m}$
	\begin{enumerate}
    \item\label{it:2a} Find all marked neighbors of $r$ by $\mathsf{SearchAll}$. Output any graph collisions found.  \hfill $\tO(\sqrt{n\lambda})$
    \item Delete all unmarked neighbors of $r$. Read the values of all non-neighbors of $r$. \hfill $\O(\sqrt{\m})$
	 \item\label{it:2c} Let $\mathcal{A}'$ denote the subset of $\mathcal{A}$ consisting of all $i\in\mathcal{A}$ with a marked neighbour in $\mathcal{B}$. Find all marked vertices in $\mathcal{A}'$ by $\mathsf{SearchAll}$. \hfill $\tO(\sqrt{n\lambda})$
  \end{enumerate}
\item Case 2: If $c_r \geq \sqrt{\m}$
	\begin{enumerate}
    \item Delete the first $r-1$ vertices in $\mathcal{A}$ since they are unmarked. 
    \item Read the values of all remaining vertices in $\mathcal{A}$. \hfill $\O(\sqrt{\m})$
	 \item\label{it:3c} Let $\mathcal{B}'$ denote the subset of $\mathcal{B}$ consisting of all $j\in\mathcal{B}$ with a marked neighbour in $\mathcal{A}$. Find all marked vertices in $\mathcal{B}'$ by $\mathsf{SearchAll}$. \hfill $\tO(\sqrt{n\lambda})$
  \end{enumerate}
\end{enumerate}
}

\begin{theorem}
For all $\lambda\geq 1$, $\cost{GC}_\text{\emph{all}}(n,\m,\lambda) \in \tO(\sqrt{n\lambda} + \sqrt{\m})$ and $\cost{GC}(n,\m) \in \tO(\sqrt{n} + \sqrt{\m})$. 
\end{theorem}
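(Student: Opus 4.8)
The plan is to establish correctness and the query bound for $\mathsf{AllGC}_G$, and then recover the single-collision bound by replacing each exhaustive search with a single search. Throughout I would exploit that $G$ is fixed and known, so the degrees $d_i$, the co-degrees $c_i = n - d_i$, and the sorted order are all available for free; queries are spent only on $\a$ and $\b$. The one combinatorial fact driving everything is $\sum_i c_i = \m$, since each of the $\m$ non-edges contributes exactly one unit to the co-degree of its $\set{A}$-endpoint. By a Markov-type count this yields the crucial inequality: at most $\sqrt{\m}$ vertices of $\set{A}$ satisfy $c_i \geq \sqrt{\m}$.

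First I would pin down Step 1 so that $r$ is the \emph{smallest-index} marked vertex of $\set{A}$; since vertices are ordered by decreasing degree, $r$ is then a highest-degree marked vertex and, more importantly, $1,\dots,r-1$ are all unmarked. This is what $\mathsf{FindMax}$ delivers (maximizing degree, ties broken toward small index) at cost $\tO(\sqrt{n})$ by Theorem~\ref{th:find-max}; if no marked vertex exists, there is no collision and we reject. The core of the argument is that each case lets us \emph{learn one entire side of the marking} using only $\O(\sqrt{\m})$ queries. In Case~2 ($c_r \geq \sqrt{\m}$), the surviving vertices $r,\dots,n$ all have $c_i \geq c_r \geq \sqrt{\m}$, so by the Markov count there are at most $\sqrt{\m}$ of them; reading them all therefore costs $\O(\sqrt{\m})$ and, together with the fact that $1,\dots,r-1$ are unmarked, reveals $\a$ on all of $\set{A}$. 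In Case~1 ($c_r \leq \sqrt{\m}$), $\mathsf{SearchAll}$ on the neighbours of $r$ reveals exactly which of them are marked, and directly reading the $c_r \leq \sqrt{\m}$ non-neighbours reveals the rest, so afterwards $\b$ is known on all of $\set{B}$. Once one side is fully known, I would form $\mathcal{A}'$ (resp.\ $\mathcal{B}'$), the vertices possessing a marked neighbour, which is computable with no queries from $G$ and the known side, and run $\mathsf{SearchAll}$ on it, reporting for each found vertex all collisions it forms with its known marked neighbours.

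For correctness I would verify that every collision $(i,j)$ is output: in Case~1, those with $i=r$ arise in Step~\ref{it:2a}, while those with $i\neq r$ have $i\in\mathcal{A}'$ and $i$ marked, so they are caught in Step~\ref{it:2c} (excluding $r$ from $\mathcal{A}'$ avoids double-counting); Case~2 is symmetric via $\mathcal{B}'$. For the final searches I would use that each marked vertex found in $\mathcal{A}'$ (or $\mathcal{B}'$) produces at least one collision and distinct such vertices produce distinct collisions, hence $t_f \leq \lambda$; with $\abs{U}\leq n$ and $\lambda\geq 1$, Corollary~\ref{cor:search-all1} bounds each $\mathsf{SearchAll}$ by $\tO(\sqrt{n(\lambda+1)}) = \tO(\sqrt{n\lambda})$. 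Summing the per-line costs $\tO(\sqrt{n}) + \tO(\sqrt{n\lambda}) + \O(\sqrt{\m})$ gives $\cost{GC}_\text{all}(n,\m,\lambda)\in\tO(\sqrt{n\lambda} + \sqrt{\m})$. For $\cost{GC}(n,\m)$ I would run the same algorithm but replace each $\mathsf{SearchAll}$ by a single $\mathsf{Search}$ (Corollary~\ref{cor:search}), halting the moment one collision is produced; each search then costs $\tO(\sqrt{n})$ rather than $\tO(\sqrt{n\lambda})$, yielding $\tO(\sqrt{n} + \sqrt{\m})$.

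I expect the main obstacle to be the case-analysis glue rather than any single estimate: checking that in \emph{both} branches exactly one of $\a,\b$ becomes fully known at cost $\O(\sqrt{\m})$, and that the subsequent one-sided $\mathsf{SearchAll}$ provably enumerates \emph{all} collisions with $t_f\leq\lambda$. In particular, the delicate points are making the tie-breaking in Step 1 precise enough that the claim ``$1,\dots,r-1$ are unmarked'' genuinely holds, and ensuring the collisions through $r$ are neither missed nor double-reported when assembling the output.
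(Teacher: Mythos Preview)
Your proposal is correct and follows essentially the same approach as the paper's proof: the Markov-type count $\sum_i c_i=\m$ to bound the number of surviving $\set{A}$-vertices in Case~2 by $\sqrt{\m}$, the direct reading of the $c_r\leq\sqrt{\m}$ non-neighbours in Case~1, and the replacement of $\mathsf{SearchAll}$ by $\mathsf{Search}$ for the single-collision bound. Your write-up is in fact more careful than the paper's on two points the paper leaves implicit---the tie-breaking that guarantees $1,\dots,r-1$ are unmarked, and the justification that $t_f\leq\lambda$ in the $\mathsf{SearchAll}$ calls---so the correctness glue you flag as the ``main obstacle'' is already handled.
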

\begin{proof}
We will analyze the complexity of $\mathsf{AllGC}_G(\a,\b)$ step by step.

Step 1 has query complexity $\tO(\sqrt{n})$ by Theorem \ref{th:find-max}. Steps 2a, 2c and 3c have query complexity $\tO(\sqrt{n\lambda})$ by Corollary \ref{cor:search-all1}. In Case 1, $r$ has $c_r\leq\sqrt{\m}$ non-neighbours, so we can certainly query them all in step 2b with $\O(\sqrt{c_r})\in\O(\sqrt{\m})$ queries.

Consider Case 2, when $c_r\geq\sqrt{\m}$. We can ignore the first $r-1$ vertices, since they are unmarked. Since the remaining $n-r+1$ vertices all have $c_i \geq c_r \geq \sqrt{\m}$, and the total number of non-edges is $\m$, we have $(n-r+1)\times \sqrt{\m} \leq \m \Rightarrow (n-r+1) \leq \sqrt{\m}$. Thus, there are at most $\sqrt{\m}$ remaining vertices and querying them all costs at most $O(\sqrt{m})$ queries.

The query complexity of this algorithm is therefore $\tO(\sqrt{n\lambda} + \sqrt{m})$, and it outputs all graph collisions. To check if there is at least one graph collision, instead of finding them all, we can replace finding all marked vertices using $\mathsf{SearchAll}$ in steps 2a, 2c and 3c, with a procedure to check if there is any marked vertex, $\mathsf{Search}$, and this only requires $\tO(\sqrt{n})$ queries by Corollary \ref{cor:search}, rather than $\tO(\sqrt{n\lambda})$.
\end{proof}

\section{Boolean Matrix Multiplication}\label{sec:bmm}

In this section we show how the graph collision algorithm from the previous section can be used to obtain an efficient algorithm for Boolean matrix multiplication and then prove a lower bound.

\subsection{Algorithm \label{sec:algo}}

What follows is a more precise statement of the high level procedure described in Section \ref{sec:rel-bmm}.
\hlight{$\mathsf{BMM}(A,B)$
\hrule
\vspace{2pt}
\begin{enumerate}
  \item Let $\widetilde{C} = 0$, $t=n$, and $V=[n]$
  \item While $t\geq 1$
  \begin{enumerate}
  \item $\mathsf{GroverSearch}(t)$ for an index $k\in V$ such that the graph collision problem on $k$ with $\overline{\widetilde{C}}$ as the underlying graph has a collision. 
  \item If such a $k$ is found
\begin{itemize}
  \item[] Compute $\mathsf{AllGC}$ on the graph defined by $\overline{\widetilde{C}}$ with oracles $A[\cdot,k]$ and $B[k,\cdot]$ and record all output graph collisions in $\widetilde{C}$. 
  \item[] Set $V\gets V-\{k\}$ and $t\gets t-1$.
\end{itemize}
\item Else: $t\gets t/2$
\end{enumerate}
\item Output $\widetilde{C}$.
\end{enumerate}
}

\begin{theorem}
The query complexity of \textsc{Boolean Matrix Multiplication} is $\tO(n\sqrt{\ell})$.
\end{theorem}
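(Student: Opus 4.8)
The plan is to bound separately the two sources of query cost in $\mathsf{BMM}(A,B)$: the outer $\mathsf{GroverSearch}$ calls that hunt for a column index $k$ admitting an as-yet-unrecorded collision, and the inner $\mathsf{AllGC}$ calls that harvest all collisions for each found $k$. Before counting, I would pin down correctness by noting the invariant $\widetilde{C}\leq C$ entrywise: every bit we set in $\widetilde{C}$ comes from a genuine collision $A[i,k]=B[k,j]=1$, so $C[i,j]=1$. Hence $\widetilde{C}$ has at most $\ell$ ones, so the bipartite graph $\overline{\widetilde{C}}$ has at most $\ell$ non-edges \emph{at every stage}, which is exactly what lets us invoke the graph-collision bounds of Section~\ref{sec:graph-col} with $m\leq\ell$ throughout. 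The outer loop follows the same $t$-schedule as $\mathsf{SearchAll}$, with an index $k$ ``marked'' precisely when the graph collision problem on $\overline{\widetilde{C}}$ with oracles $A[\cdot,k]$ and $B[k,\cdot]$ has a collision; recording collisions can only un-mark indices, so this is the robust regime that $\mathsf{SearchAll}$ is designed for. When the search finally reports no marked $k$, Theorem~\ref{thm:grover} (probability $1$ in the $t_f=0$ case) guarantees no unrecorded one remains, i.e.\ $\widetilde{C}=C$.

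For the inner calls, let $K$ be the number of distinct indices ever found and $\lambda_1,\dots,\lambda_K$ the collision counts returned by the successive $\mathsf{AllGC}$ calls. Since each recorded one is counted once, $\sum_j\lambda_j=\ell$; and since each found index yields at least one new one while there are only $n$ indices, $K\leq\min\{n,\ell\}$. The $j$-th call costs $\cost{GC}_{\text{all}}(n,m,\lambda_j)\in\tO(\sqrt{n\lambda_j}+\sqrt{\ell})$, so the total is $\tO\bigl(\sum_j\sqrt{n\lambda_j}+K\sqrt{\ell}\bigr)$. Cauchy--Schwarz gives $\sum_j\sqrt{\lambda_j}\leq\sqrt{K}\,\sqrt{\sum_j\lambda_j}=\sqrt{K\ell}$, hence $\sum_j\sqrt{n\lambda_j}\leq\sqrt{nK\ell}$. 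Using $K\leq\ell$ when $\ell\leq n$ and $K\leq n$ when $\ell>n$ shows $\sqrt{nK\ell}\in\tO(n\sqrt{\ell})$, while $K\sqrt{\ell}\leq n\sqrt{\ell}$ is immediate; so the inner calls cost $\tO(n\sqrt{\ell})$.

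For the outer search, Corollary~\ref{cor:search-all1} bounds the \emph{number of evaluations of the marking function} $f$ by $\tO(\sqrt{n(K+1)})$. Each such evaluation is itself a collision-detection run costing $\cost{GC}(n,m)\in\tO(\sqrt{n}+\sqrt{\ell})$ oracle queries, so the outer search costs $\tO\bigl(\sqrt{n(K+1)}\,(\sqrt{n}+\sqrt{\ell})\bigr)$. I would finish with the same two-regime split: when $\ell\leq n$, use $K\leq\ell$ and $\sqrt{n}+\sqrt{\ell}=\Theta(\sqrt{n})$ to get $\tO(\sqrt{n\ell}\cdot\sqrt{n})=\tO(n\sqrt{\ell})$; when $\ell>n$, use $K\leq n$ and $\sqrt{n}+\sqrt{\ell}=\Theta(\sqrt{\ell})$ to get $\tO(n\sqrt{\ell})$. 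Adding the two contributions gives the claimed $\tO(n\sqrt{\ell})$.

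The step I expect to be the main obstacle is making the outer search rigorous, for two reasons. First, the crude bound $K\leq n$ on its own yields $\tO(n(\sqrt{n}+\sqrt{\ell}))$, carrying a spurious $n^{3/2}$ term for small $\ell$; it is the sharper observation $K\leq\ell$ (each found witness produces a fresh one) that saves the bound in the regime $\ell\leq n$, so the casework is essential rather than cosmetic. Second, and more delicate, $f$ is a \emph{bounded-error quantum subroutine}, not a clean oracle, so feeding it into $\mathsf{GroverSearch}$ requires first amplifying its success probability to $1-1/\mathrm{poly}(n)$ and controlling error propagation across the $\mathrm{poly}(n)$ nested calls by a union bound; the resulting $\mathrm{polylog}$ overhead is precisely what $\tO$ absorbs. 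Once these points are handled, the algorithm outputs $C$ with high probability in total query complexity $\tO(n\sqrt{\ell})$.
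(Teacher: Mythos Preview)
Your proposal is correct and follows essentially the same approach as the paper: bound the $\mathsf{AllGC}$ calls via $\sum_j\lambda_j=\ell$, $K\leq\min\{n,\ell\}$, and Cauchy--Schwarz, and bound the outer search by $\tO(\sqrt{nK})$ evaluations of the collision-detection predicate, each costing $\cost{GC}(n,m)\in\tO(\sqrt{n}+\sqrt{\ell})$. The only cosmetic difference is that you invoke Corollary~\ref{cor:search-all1} directly for the outer search cost, whereas the paper re-derives that bound inline by separately summing the ``no witness found'' rounds (geometric in the halving parameter $t$) and the ``witness found'' rounds (using $t_{i_j}\geq T-j+1$); both routes give the same $\tO(\sqrt{nK}\,(\sqrt{n}+\sqrt{\ell}))$ and finish with the same $K\leq\min\{n,\ell\}$ case split.
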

\begin{proof}
We will analyze the complexity of the algorithm $\mathsf{BMM}(A,B)$. 
We begin by analyzing the cost of all the iterations in which we don't find a marked $k$. We have by Theorem \ref{thm:grover} that $\mathsf{GroverSearch}(t)$ costs $\tO(\sqrt{n/t})$ queries to a procedure that checks if there is a collision in the graph defined by $\overline{\widetilde{C}}$ with respect to $A[\cdot,k]$ and $B[k,\cdot]$, each of which costs $\cost{GC}(n,m_i)$, where $m_i\leq\ell$ is the number of $1$s in $\widetilde{C}$ at the beginning of the \th{i} iteration. The cost of these steps is at most the following:
$$\tO\(\sum_{i=0}^{\log n}\sqrt{\frac{n}{2^i}}\cost{GC}(n,m_i)\)\in\tO\(\sum_{i=0}^{\log n}\sqrt{\frac{n}{2^i}}(\sqrt{n}+\sqrt{m_i})\)$$
$$\in\tO\((n+\sqrt{n\ell})\sum_{i=0}^{\log n}\(\frac{1}{\sqrt{2}}\)^i\)\in\tO\(n+\sqrt{n\ell}\)$$

We now analyze the cost of all the iterations in which we do find a marked witness $k$. Let $T$ be the number of witnesses found by $\mathsf{BMM}$. Of course, $T$ is a random variable that depends on which witnesses $k$ are found, and in which order. We always have $T\leq\min\{n,\ell\}$. 

Let $i_1,\dots,i_T$ be the indices of rounds where we find a witness. Let $t_j$ be the value of $t$ in round $j$.
Since there must be at least 1 marked element in the last round in which we find a marked element, we have $t_{i_T} \geq 1$. Since we find and eliminate at least 1 marked element in each round, we also have $t_{i_{(T-j-1)}} \geq t_{i_{(T - j)}} +1$, which yields $t_{i_{(T-j)}} \geq j + 1 \Rightarrow t_{i_j} \geq T-j+1$.

Let $\lambda_j$ be the number of graph collisions found on the \th{j} successful iteration, that is, the number of pairs witnessed by the \th{j} witness, $k_j$, that have not been recorded in $\widetilde{C}$ \emph{at the time we find $k_j$}. Then $\lambda_j$ is also a random variable depending on which other witnesses $k$ have been found already, but we always have $\sum_{j=1}^{T}\lambda_j=\ell$. 

Then we can upper bound the cost of all the iterations in which we do find a witness by the following:
\begin{eqnarray}
&&\tO\left(\sum_{j=1}^{T}\(\sqrt{\frac{n}{t_{i_j}}}\cost{GC}(n,m_{i_j})+\cost{GC}_{\text{all}}(n,m_{i_j},\lambda_j)\)\right)\\
&\in&\tO\left(\sum_{j=1}^{T}\(\sqrt{\frac{n}{T-j+1}}\cost{GC}(n,m_{i_j})+\cost{GC}_{\text{all}}(n,m_{i_j},\lambda_j)\)\right)\\
&\in& \tO\(\sqrt{nT}\cost{GC}(n,m_{i_j})+\sum_{j=1}^{T}\cost{GC}_\text{all}(n,m_{i_j},\lambda_j)\)\\
&\in& \tO\(\sqrt{nT}\(\sqrt{\ell}+\sqrt{n}\)+\sum_{j=1}^{T}\(\sqrt{n\lambda_j}+\sqrt{\ell}\)\)\\
&\in& \tO\(\sqrt{nT\ell}+n\sqrt{T}+\sqrt{n{\ell}{T}}+T\sqrt{\ell}\)\\
&\in& \tO\(\sqrt{n\min\{n,\ell\}\ell}+n\sqrt{\min\{n,\ell\}}+\sqrt{\min\{n,\ell\}n\ell}+\min\{n,\ell\}\sqrt{\ell}\)\\
&\in& \tO\(n\sqrt{\ell}\)
\end{eqnarray}
In (4), we use the fact that $m_{i_j}\leq \ell$, and in (5), we use $\sum_{j=1}^T \sqrt{\lambda_j} \leq \sqrt{T}\sqrt{\sum_j \lambda_j} = \sqrt{\ell T}$, which follows from the Cauchy--Schwarz inequality.
\end{proof}

\subsection{Lower Bound}

\begin{theorem}\label{th:lb}
The query complexity of \textsc{Boolean Matrix Multiplication} is $\Omega(n\sqrt{\min\{\ell,n^2-\ell\}})$.
\end{theorem}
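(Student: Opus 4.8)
The plan is to prove the lower bound via a reduction from a well-understood threshold problem, just as the introduction hints by mentioning ``$\ell$-\textsc{Threshold}.'' Recall that the $k$-\textsc{Threshold} problem asks, given oracle access to a string $z\in\{0,1\}^N$, to decide whether the Hamming weight $|z|$ is at least $k$ or at most $k-1$ (under a promise). Its bounded-error quantum query complexity is known to be $\Theta(\sqrt{k(N-k+1)})$; the lower bound follows from the general adversary method or from the polynomial method applied to symmetric functions. I would state this as the key black-box input, citing the appropriate reference, and then build an embedding of a threshold instance into a Boolean matrix multiplication instance in which the number of queries to the matrix entries is comparable.

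The core of the argument is the embedding. First I would handle the regime $\ell \le n^2/2$ where $\min\{\ell, n^2-\ell\} = \ell$, so the target bound is $\Omega(n\sqrt{\ell})$. The idea is to fix $A$ to be a matrix with a single column of all $1$s (say column $1$), so that $C = AB$ has $C[i,j] = \bigvee_k A[i,k]\wedge B[k,j] = B[1,j]$ for every $i$; thus $C$ simply copies the first row of $B$ into all $n$ rows. I would then encode a threshold instance $z$ of length $N \approx \ell/n$ into the relevant entries of row $1$ of $B$ and pad the rest so that the total number of $1$s in $C$ is exactly $\ell$ (each bit of $z$ that is set contributes a whole column of $n$ ones to $C$). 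Any algorithm computing $C$ must in particular determine every entry $B[1,j]$, hence must solve the embedded threshold problem, so it needs $\Omega(\sqrt{(\ell/n)\cdot n}) = \Omega(\sqrt{\ell}\cdot\sqrt{n}/\sqrt{n})$ queries; I expect the counting to work out to $\Omega(n\sqrt{\ell})$ once the dimensions are balanced correctly, but calibrating exactly how many independent threshold instances to stack (one per row-block, giving a factor of $n$) to reach $n\sqrt{\ell}$ rather than merely $\sqrt{\ell}$ is the delicate bookkeeping. By symmetry ($\ell \leftrightarrow n^2 - \ell$, i.e.\ swapping the roles of $0$s and $1$s via complementation), the same construction yields $\Omega(n\sqrt{n^2-\ell})$ in the regime $\ell \ge n^2/2$, and the two cases combine to give $\Omega(n\sqrt{\min\{\ell, n^2-\ell\}})$.

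The step I expect to be the main obstacle is getting the dimensional calibration exactly right so that a single embedded threshold instance of weight $\Theta(\ell)$ inside an $n\times n$ output actually forces $\Omega(n\sqrt{\ell})$ queries rather than a weaker bound. The naive ``copy one row'' construction above forces the algorithm to read $n$ entries that each must be pinned down, which is why one stacks $n$ independent sub-instances; but one must verify that these sub-instances are genuinely independent from the oracle's point of view (so that their lower bounds add in the right way under the adversary/direct-sum composition) and that the promise structure of each threshold piece is consistent with fixing the \emph{total} weight of $C$ to $\ell$. Concretely I would partition the output into $n$ disjoint row- or column-blocks, embed an independent threshold instance of weight $\approx \ell/n$ and length $\approx n$ into each, and invoke a direct-sum or composition theorem for the threshold lower bound. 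Verifying that the reduction is query-preserving (each oracle query to $A$ or $B$ in the reduction corresponds to at most a constant number of queries to the threshold oracle $z$) and that the promise $\ell \le \eps n^2$ for $\eps < 1$ ensures the $(N-k+1)$ factor in the threshold bound stays $\Theta(N)$ rather than collapsing, is where the careful work lies.
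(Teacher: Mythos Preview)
Your high-level plan—reduce from $\ell$-\textsc{Threshold}—is exactly what the paper does, but you are making the embedding much harder than necessary, and the ``main obstacle'' you anticipate is entirely self-inflicted.

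The paper's reduction is a one-liner: set $A=I$ (the identity), and let $B$ encode a threshold instance $f:[n^2]\to\{0,1\}$ directly, one bit per entry. Then $C=AB=B$, so any algorithm that outputs $C$ recovers all of $f$ and in particular decides $\ell$-\textsc{Threshold} on $N=n^2$ bits. The known lower bound $\Omega\bigl(\sqrt{N\min\{\ell,N-\ell\}}\bigr)=\Omega\bigl(n\sqrt{\min\{\ell,n^2-\ell\}}\bigr)$ then applies immediately, with no case split, no direct-sum theorem, and no calibration issues. Queries to $A$ are free (you know $A$), and each query to $B$ is one query to $f$, so the reduction is trivially query-preserving.

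Your construction with $A$ having a single all-ones column forces $C$ to have rank one, so the $n$ rows of $C$ are \emph{not} independent instances; you correctly sense that this only yields $\Omega(\sqrt{\ell})$. Your proposed fix—partition into $n$ independent blocks and invoke a direct-sum/composition theorem—would essentially force you back to $A=I$ anyway (that is precisely the matrix that makes the rows of $C$ depend on disjoint parts of $B$), and then the direct-sum step is superfluous: one threshold instance of size $n^2$ already gives the full bound. The symmetry argument for $\ell>n^2/2$ is likewise unnecessary, since the $\min$ is already baked into the threshold lower bound.
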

\begin{proof}
We will reduce the problem of $\ell$-\textsc{Threshold}, in which we must determine whether an input oracle $f$ has $\geq \ell$ or $<\ell$ marked elements, to \textsc{Boolean Matrix Multiplication}. 

Consider an instance of $\ell$-\textsc{Threshold} of size $n^2$, $f:[n^2]\rightarrow\{0,1\}$. We can construct an instance of \textsc{Boolean Matrix Multiplication} as follows. Set $A$ to the identity, and let $B$ encode $f$. Finding $AB$ then gives the solution to the $\ell$-\textsc{Threshold} instance. 
 By \cite{BBCMdW01}, $\ell$-\textsc{Threshold} (with inputs of size $n^2$) requires $\Omega(\sqrt{n^2\min\{\ell,n^2-\ell\}})$ queries to solve with bounded error. 
\end{proof}

This lower bound implies that our algorithm is tight for any $\ell\leq \eps n^{2}$ for any constant $\eps<1$. However, it is not tight for $\ell=n^2-o(n)$. We can search for pairs $(i,j)$ such that there is no $k\in [n]$ that witnesses $(i,j)$ in cost $n^{3/2}$. If there are $m$ $0$s, we can find them all in $\tO(n^{3/2}\sqrt{m})$, which is $o(n\sqrt{\ell})$ when $m\in o(n)$. It remains open to close the gap between $\tO(n^{3/2}\sqrt{m})$ and $\Omega(n\sqrt{m})$ when $m\in o(n^2)$. 

\section{Acknowledgements}

This work was partially supported by NSERC, MITACS, QuantumWorks, the French ANR Defis project ANR-08-EMER-012 (QRAC), and the European Commission IST STREP project 25596 (QCS).

\bibliographystyle{alpha}
\bibliography{references}

\newcommand{\etalchar}[1]{$^{#1}$}
\begin{thebibliography}{DHHM06}

\bibitem[Amb04]{amb04}
A.~Ambainis.
\newblock Quantum walk algorithm for element distinctness.
\newblock In {\em Proceedings of the 45th IEEE Symposium on Foundations of
  Computer Science}, pages 22--31, 2004.

\bibitem[BBBV97]{BBBV97}
C.~H. Bennett, E.~Bernstein, G.~Brassard, and U.~Vazirani.
\newblock Strengths and weaknesses of quantum computing.
\newblock {\em SIAM Journal on Computing (special issue on quantum computing)},
  26:1510--1523, 1997.
\newblock {\tt arXiv:quant-ph/9701001v1}.

\bibitem[BBC{\etalchar{+}}01]{BBCMdW01}
R.~Beals, H.~Buhrman, R.~Cleve, M.~Mosca, and R.~de~Wolf.
\newblock Quantum lower bounds by polynomials.
\newblock {\em Journal of the ACM}, 48, 2001.

\bibitem[BBHT98]{BBHT98}
M.~Boyer, G.~Brassard, P.~H{\o}yer, and A.~Tapp.
\newblock Tight bounds on quantum searching.
\newblock {\em Fortschritte der Physik}, 46(4-5):493--505, 1998.

\bibitem[Bel11]{bel11}
A.~Belovs.
\newblock Span programs for functions with constant-sized 1-certificates.
\newblock Technical Report arXiv:1105.4024, arXiv, 2011.

\bibitem[B{\v{S}}06]{BS06}
H.~Buhrman and R.~{\v{S}}palek.
\newblock Quantum verification of matrix products.
\newblock In {\em Proceedings of the 17th ACM-SIAM Symposium On Discrete
  Algorithms}, pages 880--889, 2006.

\bibitem[CKK11]{CKK11}
A.~Childs, S.~Kimmel, and R.~Kothari.
\newblock The quantum query complexity of read-many formulas.
\newblock Technical Report arXiv:1112.0548v1, arXiv, 2011.

\bibitem[DH96]{dh96}
C.~D{\"u}rr and P.~H{\o}yer.
\newblock A quantum algorithm for finding the minimum.
\newblock Technical Report arXiv:quant-ph/9607014v2, arXiv, 1996.

\bibitem[DHHM06]{dhhm06}
C.~D{\"u}rr, M.~Heiligman, P.~H{\o}yer, and M.~Mhalla.
\newblock Quantum query complexity of some graph problems.
\newblock {\em SIAM Journal on Computing}, 35(6):1310--1328, 2006.

\bibitem[Gal12]{leg12}
F.~Le Gall.
\newblock Improved output-sensitive quantum algorithms for {B}oolean matrix
  multiplication.
\newblock In {\em Proceedings of the 23rd ACM-SIAM Symposium On Discrete
  Algorithms}, pages 1464--1476, 2012.

\bibitem[Gro96]{gro96}
L.~K. Grover.
\newblock A fast quantum mechanical algorithm for database search.
\newblock In {\em Proceedings of the 28th ACM Symposium on Theory of
  Computing}, pages 212--219, 1996.

\bibitem[Lin09]{lin09}
A.~Lingas.
\newblock A fast output-sensitive algorithm for {B}oolean matrix
  multiplication.
\newblock In {\em Proceedings of the 17th European Symposium on Algorithms},
  pages 408--419, 2009.

\bibitem[MSS07]{MSS07}
F.~Magniez, M.~Santha, and M.~Szegedy.
\newblock Quantum algorithms for the triangle problem.
\newblock {\em SIAM Journal on Computing}, 37(2):413--424, 2007.

\bibitem[Sho97]{sho97}
P.~W. Shor.
\newblock Polynomial-time algorithms for prime factorization and discrete
  logarithms on a quantum computer.
\newblock {\em SIAM Journal on Computing}, 26:1484--1509, October 1997.

\bibitem[VW10]{VW10}
V.~{Vassilevska Williams} and R.~Williams.
\newblock Sub-cubic equivalences between path, matrix and triangle problems.
\newblock In {\em Proceedings of the 51st IEEE Symposium on Foundations of
  Computer Science}, pages 645--654, 2010.

\end{thebibliography}

\end{document}